\newcommand{\B}{{\mathcal B}}
\newcommand{\always}{{\Box}}
\newcommand{\commentout}[1]{}
\newcommand{\nciteyear}[1]{\cite{#1}}
\newcommand{\Pg}{{\sf Pg}}
\newcommand{\BibTeX}{\rm B\kern-.05em{\sc i\kern-.025em b}\kern-.08em\TeX}
\newtheorem{theorem}{Theorem}[section]
\newtheorem{example}{Example}
\newcommand{\fullv}[1]{#1}
\newcommand{\shortv}{\commentout}
\title{Joint Behavior and Common Belief}
\author{Meir Friedenberg
\institute{Department of Computer Science \\ Cornell University}
\email{meir@cs.cornell.edu}
\and
Joseph Y. Halpern
\institute{Department of Computer Science \\ Cornell University}
\email{halpern@cs.cornell.edu}
}
\begin{document}
\maketitle

\begin{abstract}
	For over 25 years, common belief has been widely viewed as
	necessary for joint behavior.  But this is not
	quite correct.  We show by example that what can naturally be thought
	of as joint behavior can occur without common
	belief.  We then present two variants of common belief that can lead
	to joint behavior, even without standard common belief ever
	being achieved,
	and show that one of them, \emph{action-stamped} common belief, is in
	a sense necessary and sufficient for joint behavior.
	These observations are significant because, as is well known, common
	belief is quite difficult to achieve in practice, whereas these
	variants are more easily achievable.
\end{abstract}

\section{Introduction}


The past few years have seen an uptick of interest in studying
cooperative AI, that is, AI systems that are designed to be effective
at cooperating.  Indeed, a number of influential researchers recently
argued that ``[w]e need to build a science of cooperative
AI \dots\ progress towards socially valuable AI will be stunted unless
we put the problem of cooperation at the centre of our research'' 
~\cite{DBHHLG21}.

One type of cooperative behavior is
\emph{joint behavior}, that is, collaboration scenarios where the success of
the joint action is dependent on all agents doing their
parts; one agent deviating can cause the efforts of others to be
ineffective.
The notion of joint behavior has been studied (in much detail) under
various names such as ``acting
together'', ``teamwork'', ``collaborative plans'', and ``shared
plans'',
and highly influential models of it were developed 
(see, e.g., ~\cite{bratman92, CL91, GK96, GrS90, LVC90, Tuomela05}). Efforts
were also made to engineer some of these theories into
real-world joint planning systems~\cite{tambe97, RS97}.  
Examples of the types of scenarios these works considered include
drivers in a caravan, where if any agent deviates it might lead the
entire caravan to get derailed, and a company of military helicopters,
where deviation on the part of some agents can lead to the remaining
agents being stranded or put in unnecessarily high-risk scenarios. 

All the earlier work agrees on the importance
of beliefs for this type of cooperation.   
In particular, because each agent would do her part only if she
believed that all of the other agents would do their part as well,
there is a widespread claim that \emph{common belief} (often called \emph{mutual
	belief}) of how 
the agents would behave was necessary.   That is, not only did
everyone have to believe all of the agents would act as desired, but
everyone had to believe everyone believed it, and everyone had to
believe that everyone believed everyone believed it, etc.  This, they
argued, followed from the fact that everyone acts only if they believe
everyone else will. 
(See, e.g., \cite{bratman92, CL91, GK96, GrS90, LVC90,Tuomela05} for
examples of this claim.) 

As we show in this paper, this conclusion is not quite right.  We do
not need common belief for joint behavior; weaker variants suffice.
Indeed, we provide a variant of common belief that we call 
\emph{action-stamped} common belief that we show is, in a sense,
necessary and sufficient for joint behavior.  The key insight is that
agents do not have to act simultaneously for there to be joint behavior.
If agent 2 acts after agent 1, agent 1 does not have to believe, when
he acts, that agent 2 currently believes that all agents will carry
out their part of the joint behavior.  Indeed, at the point that agent
1 acts, agent 2 might not even be aware of the joint action.  It
suffices that agent 2 believes \emph{at the point that she carries out
	her part of the joint behavior} that all the other agents will
believe at the points where they are carrying out their parts of the
joint behavior \ldots\ that everyone will act as desired at the
appropriate time.
If actions must occur simultaneously, then common belief is necessary
\cite{FHMV};
the fact that we do not require simultaneous actions is what allows us
to consider weaker variants of common belief.

%

Why does this matter?  Common belief may be hard to obtain (see
\cite{FHMV}); it may be possible to obtain action-stamped common belief
in circumstances where common belief cannot be obtained.   Thus, if we
assume that we need common belief for joint behavior, we may end up mistakenly
giving up on cooperative behavior when it is in fact quite feasible.


\fullv{
The rest of the paper is organized as follows.  In the next section,
we provide the background for the formal (Kripke-structure based)
framework that we use throughout the paper.
In Section~\ref{sec:time-stamped}, we give our first example
showing that agents can have joint behavior without common belief, and
define a variant of common belief that we call \emph{time-stamped
	common belief} which enables it to happen.  In
Section~\ref{sec:action-stamped}, we give a modified version of the
example where time-stamped common belief does not suffice for joint behavior, but \emph{action-stamped common belief}, which is yet more general, does.
In general, the group of agents involved in a joint behavior need not be static; it may change over time.
For example, we would like to view the firefighters at the scene of a fire
as acting jointly, but this group might change over time as
additional firefighters arrive and some firefighters leave.
In Section~\ref{sec:dynamic}, we show how action-stamped (and
time-stamped) common belief can be extended to deal with
the group of agents changing over time.
In Section~\ref{sec:sig}, we go into more detail regarding the
significance of these results.  
In Section~\ref{sec:correctness}, we show that there is a sense in
which action-stamped common belief is necessary and sufficient for
joint behavior.  Finally, in Section~\ref{sec:conclusion}, we conclude. 
}

\section{Background}

To make our claims precise, we need to be able to talk formally
about beliefs and time.  To do so, we draw on standard ideas from
modal logics and the runs-and-systems framework of Fagin et
al. \nciteyear{FHMV}. 

Our models have the form $M=(R, \Phi, \pi, \B_1, \dots,
\B_n)$. 
$R$ is a \emph{system}, which, by definition, is a set of
\textit{runs}, each of which describes a way 
the system might develop over time.  Given a run $r \in R$ and a time $n
\in \mathbb{N}_{\geq 0}$ (for simplicity, we assume that time ranges
over the natural numbers), we call $(r,n)$ a \textit{point} in the
model; that is, it describes a point in time in one way the system
might develop.  $\Phi$ is the set of variables.  In general, we will
denote variables in $\Phi$ with uppercase letters (e.g., $P$) and
values of those variables with lowercase ones (e.g., $p$). $\pi$ is an
\textit{interpretation} that maps each point in the model and each
variable $P \in \Phi$ to a value, denoting the value of $P$ at that
point.
(Thus, the analogue of a primitive proposition for us is a formula of
the form $P=p$: variable $P$ takes on value $p$.)
	Finally, for each agent $i$, there is a \textit{binary
		relation} $\B_i$ over the points in the model.  Two points $(r_1,
	n_1)$ and $(r_2, n_2)$ are related by $\B_i$ (i.e.,
	$(r_1,n_1),(r_2,n_2)) \in \B_i$) if the two points are
	indistinguishable to agent $i$; that is, if, at the point $(r_1,n_1)$,
	agent $i$ cannot tell if the true point is $(r_1,n_1)$ or $(r_2,n_2)$.
	We assume throughout that the $\B_i$
	relations satisfy the standard properties of a belief relation:
	specifically, they are \emph{serial} (for all points $(r,n)$, there
	exists a point $(r',n')$ such that $((r,n),(r',n')) \in \B_i$),
	\emph{Euclidean} (if $((r_1,n_1),(r_2,n_2))$ and
	$((r_1,n_1),(r_3,n_3))$ are in $\B_i$, then so
	\sloppy{is $((r_2,n_2),(r_3,n_3))$), and transitive.}
	These assumptions ensure that the standard axioms for belief hold; see
	\cite{FHMV} for further discussion of these issues.
	
To talk about these models, we use the language 
	generated by the following context-free grammar: 
	\[
	\varphi := P=p \;|\; \neg \psi \;|\; \psi_1 \wedge \psi_2 \;|\; B_i\psi \;|\; E_{G}\psi \;|\; C_{G}\psi,
	\]
	where $P$ is a variable in $\Phi$, $p$ is a possible value of $P$, and
	$G$ is a non-empty subset of the 
	agents.  The intended reading of $B_i \psi$ is that agent $i$ believes
	$\psi$; for $E_{G}\psi$ it is that $\psi$ is believed by everyone in
	the group $G$; and for $C_{G}\psi$ it is that $\psi$ is common belief
	among the group $G$. 
	
	We can inductively give semantics to formulas in this language relative to points in the above models.  The propositional operators $\neg$ and $\wedge$ have the standard propositional semantics.  The other operators are given semantics as follows:
	\begin{itemize}
		\item $(M,r,n) \vDash P=p$ if $\pi((r,n),P) = p$, 
		\item $(M,r,n) \vDash B_i \psi$ if $(M,r',n') \vDash \psi$ for
		all points $(r', n')$ such that $((r,n), (r', n')) \in \B_i$, 
		\item $(M,r,n) \vDash E_G \psi$ if $(M,r,n) \vDash B_i \psi$ for all $i \in G$
		\item $(M,r,n) \vDash C_G \psi$ if $(M,r,n) \vDash E^{k}_G \psi$ for all $k \geq 1$, where $E^{1}_G \psi := E_G \psi$ and $E^{k+1}_G \psi := E_G(E^{k}_G \psi)$.
	\end{itemize}
	
	There are a number of axioms that are valid in these models.  Since they are
	not relevant for the points we want to make here, we refer the reader
	to \cite{FHMV} for a discussion of them.
	
	\section{Time-Stamped Common Belief}\label{sec:time-stamped}
	
	We now give our first example showing that
	joint behavior does not require common belief.
	We do not define joint behavior here; indeed, as we
	said, there are a number of competing definitions in the literature
	\cite{LVC90, CL91, GK96, GrS90}.  But we hope the reader will agree
	that, however we define 
	it, the example gives an instance of it.
	
	\begin{quote}
		General $Y$ and her forces are standing on the top of a hill.
		Below them in the valley, the enemy is encamped.  General $Y$
 knows that her forces are not strong enough to defeat the enemy on
their own.  
She also knows that General $Z$ and his
troops, though knowing nothing of the encamped enemy, will arrive on the hill the next day at noon on the way back from a training exercise. 
Unfortunately though, General $Y$ and her troops must move on before then.  
Thankfully, all generals are trained for how to
		deal with this situation.  Just as her training recommends,
		General $Y$ sets up traps that will delay
		the enemy's retreat, and leaves one soldier behind to
                inform General $Z$ of the traps upon his
		arrival.  At 11:30 the next morning, General $Y$ receives a
		(false) message informing her that General $Z$ and his troops
		have been captured, and thus (incorrectly) surmises that the
		enemy will live to fight another day.  What in fact happens
		is that General $Z$'s troops arrive at noon and attack
		the enemy, the enemy attempts to retreat and is stopped by
		General $Y$'s traps, and the enemy is successfully defeated. 
	\end{quote}
	
	Clearly, Generals $Y$ and $Z$ jointly defeated the enemy.  
	Yet they never achieved common belief
	of what they were doing.  Before noon,
	General $Z$ didn't even think that the enemy was there, and 
	from 11:30 on, General $Y$ thought that General $Z$ would never arrive.  It
	follows that there was no 
	point at which they could have had common belief.  
	So what is going on here? 
	
	What this example suggests is that there are times when a type of
	\textit{time-stamped common belief} (cf., \cite{FHMV,HM90}) suffices
	to enable joint 
	behavior.  Intuitively, on the first day, General $Y$ believed that at
	noon on the second day General $Z$ would act, attacking the enemy.
	Similarly, at noon on the second day, General $Z$ believed that General $Y$ had
	acted the day before, setting up the necessary traps.  They also
	hold higher-order beliefs; for example, at the time she set the traps,
	general $Y$ believed that at noon the next day general $Z$ would believe
	that she had set the traps, otherwise she wouldn't have wasted the
	resources to set them, and so on.
	Much as in the usual case of common
	belief, these nested beliefs extend to arbitrary depths.
	What sets this example apart from those considered by earlier work is
	that, whereas in the earlier work agents needed to believe others would act as
	desired \emph{at the same point}, here the agents need to believe only that
	others will act as desired \emph{at the points where they're supposed
	to act for the joint behavior}.  This suggests that time-stamped
common belief can suffice for joint behavior.


We can capture this type of time-stamped common belief formally
with the following additions to the logic and semantic models above.
Syntactically, we add two 
more operators to the language, $E^{t}_{G} \psi$ and $C^{t}_{G} \psi$,
where $G$ is a set of agents.  
We then add to the semantic model a function $t$ 
that maps each agent and run to a
non-negative integer.
The intended reading of these is ``each agent $i \in G$ believes at
the time $t(i,r)$ that $\psi$'' and ``it is
time-stamped-by-$t$ common belief among the agents in $G$ that
$\psi$'', respectively.  We give semantics to these operators as follows: 
\begin{itemize}
	\item $(M,r,n) \vDash E^{t}_{G} \psi$ if $(M,r,t(i,r)) \vDash
	B_{i} \psi$ for all $i \in G$ 
	\item $(M,r,n) \vDash C^{t}_{G} \psi$ if $(M,r,n) \vDash E^{t,k}_G \psi$ for all $k \geq 1$, where $E^{t,1}_G \psi := E^{t}_G \psi$ and $E^{t,k+1}_G \psi := E^{t}_G(E^{t,k}_G \psi)$.
\end{itemize}
These definitions are clearly very similar to the (standard)
definitions given above for $E_{G} \psi$ and $C_{G} \psi$,
except that the beliefs of each agent $i \in G$ in run $r$ is
considered at the time $t(i,r)$.
It follows from the semantic definitions that  $E^t_G \psi$ and $C^t_G
\psi$ hold at either all 
points in a run or none of them.  

In the example above, this notion of time-stamped common
belief \textit{is} achieved if we take $t(Y,r)$ to be the time in run
$r$ that $Y$ laid the traps (which may be different times in different
runs) and take $t(Z,r)$ to be the time that $Z$ arrived in run $r$
(which was noon in the actual run, but again, may be different times
in different runs), 
provided that it is (time-stamped) common belief that both $Y$ and $Z$
will follow
their training.
That is, when $Y$ lays the
traps, $Y$ must believe that $Z$ will believe when he arrives that $Y$
laid the
traps, $Z$ will believe when he arrives that $Y$
believed when she laid the traps that he
would believe when he arrived that $Y$ laid the traps, and so on.  
The key point here is that time-stamped common
belief can sometimes suffice for achieving cooperative behavior, even
without standard common knowledge. 

Our notion of time-stamped common belief is a
generalization of (and was inspired by) Halpern and Moses' notion of
(time-$T$)
\emph{time-stamped common knowledge}.
Roughly speaking, for them, time-$T$ time-stamped common knowledge
of $\phi$
holds among the agents 
in a group $G$ if every agent $i$ in $G$ knows $\phi$ at
time $T$ on her clock, all agents in $G$ know at time $T$ on
their clock that all agents in $G$ know $\phi$ at time $T$ on their clock,
and so on (where $T$ is a fixed, specific time).  If it is common knowledge
that clocks are synchronized, then time-stamped common knowledge
reduces to common knowledge.  If we take $t(i,r)$ to be the time in
run $r$ that $i$'s clock reads time $T$ (and assume that it is
commonly believed that each agent's clock reads time $T$ at some point
in every run), then their notion of time-stamped common knowledge
becomes a special 
case of our time-stamped common belief.  But note that with time-stamped
common belief, we have the flexibility of referring to different times
for different agents, and the time does not have to be a clock
reading; it can be, for example, the time that an event like laying
traps occurs.

\section{Action-Stamped Common Belief}\label{sec:action-stamped}

There is an even more general variant of common belief that can
suffice for joint behavior.  
What really mattered in the previous example is that everyone had the
requisite beliefs  
at the times that they were acting.  But there need not necessarily
be only one such point per agent per run; an agent might act multiple
times 
as part of the plan, as the following modified version of the
story illustrates:

\begin{quote}
	\commentout{
		General $Y$ and her forces are standing on the top of a hill.
		Below them in the valley, the enemy is encamped.  General $Y$
		knows that her forces are not single-handedly strong enough to
		defeat the enemy.  But she also knows that General $Z$ and his
		troops are expected to arrive on the hill on the opposite side
		at some time in the near future, although she and her troops
		must move on before then.  Thankfully, all generals are
		trained for how to deal with this situation.  Just as her
		training recommends, General $Y$ sets up traps 
		that will delay the enemy's retreat, and leaves one soldier
		behind to go to the opposite hill and inform General $Z$ of the
		traps upon his arrival.  The next morning, General $Y$ receives
		a (false) message informing her that General $Z$ and his troops
		have been captured, and thus (incorrectly) surmises that the
		enemy will live to fight another day.  What in fact happens,
		though, is that General $Z$'s troops arrive later that day and
		are informed by the remaining soldier that \emph{at some point
			not too long ago} General $Y$'s troops set traps. They attack
		the enemy, the enemy attempts to retreat and is stopped by
		General $Y$'s traps, and the enemy is successfully defeated. 
	} 
	
	General $Y$ and her forces arrive to the south of the town where 
	the enemy forces are encamped.  General $Y$
        	knows that her forces are not strong enough to
	defeat the enemy on their own.  She also knows that General $Z$ and his
	troops are expected to arrive to the north of the city
	some time in the near future, though she and her troops
	must move on before then. The swiftly-coursing river
	prevents the enemy from escaping to the east.  But
	unfortunately, they can 
	still escape inland to the west. 
	Thankfully, all generals are
	trained for how to deal with this situation as well.  Just as her
	training recommends, General $Y$ sets up traps 
	that will delay the enemy's southward retreat and then, as she
	heads inland, 
	also sets up traps to the west, finally leaving one soldier
	behind to go north and inform General $Z$ of the
	traps upon his arrival.  The next morning, General $Y$ receives
	a (false) message informing her that General $Z$ and his troops
	have been captured, and thus (incorrectly) surmises that the
	enemy will live to fight another day.  What in fact happens
	is that General $Z$'s troops arrive later that day and
	are informed by the remaining soldier that, not
	too long ago, General $Y$'s troops set traps to the south 
	and west. They attack the enemy, the enemy attempts to retreat
	and is stopped by General $Y$'s traps, and the enemy is successfully defeated. 
\end{quote}

Again, Generals $Y$ and $Z$ jointly and collaboratively 
defeated the enemy,  
but time-stamped common belief doesn't suffice for this version of the story,
because we cannot specify a single time for General $Y$'s actions.
%
	Instead, what really matters is that when they are acting as 
part of a joint plan they hold the requisite (common)
beliefs. The joint plan need not be known upfront;
General $Z$ does not know what he will need to do to achieve the
common goal until he arrives at the scene.  To capture this new
requirement, we define a notion of \textit{action-stamped common
belief}.  

We begin by adding a special Boolean variable $ACTING_{i, G}$ for
any group $G$ and agent $i \in G$.  This variable is true (i.e., takes value 1, as
opposed to 0) at a point $(r,n)$ if the agent
$i$ is acting towards the group plan of $G$ at $(r,n)$ and false otherwise.
So for the generals, $ACTING_{Y, G} = 1$ would be true when she lays the
traps, $ACTING_{Z, G} = 1$ would be true at the point when he attacks,
and they'd both be false otherwise (where $G = \{ Y, Z \}$).
We often write $ACTING_{i,G}$ and $\neg ACTING_{i,G}$ instead of $ACTING_{i,G} = 1$ and $ACTING_{i,G} = 0$, and similarly for other Boolean variables.
By using $ACTING_{i,G}$, we can abstract away from what actions are
performed; we just care that some action is performed by agent $i$
towards the group plan, without worrying about what that action is.

As in the case of time-stamped common belief, we add two modal operators to the
language (in addition to the variables $ACTING_{i,G}$).  Let $G$ be
a set of agents.  $E^{\mathbf{a}}_{G} \psi$ 
then expresses that, for each agent $i \in G$, 
whenever $ACTING_{i,G}$ holds (it may hold several
times in a run, or never), $i$
believes $\psi$. $C^{\mathbf{a}}_{G} \psi$ then
defines the corresponding notion of common belief for the points at
which agents act at part of the group. 

We give semantics to these modal operators as follows:
\begin{itemize}
\item $(M,r,n) \vDash E^{\mathbf{a}}_{G} \psi$ if
for all $n'$ and all $i \in G$ such that $(M,r,n') \vDash
ACTING_{i,G} = 1$, it is also the case that $(M,r,n') \vDash
B_{i} \psi$.
\item $(M,r,n) \vDash C^{\mathbf{a}}_{G} \psi$ if $(M,r,n)
\vDash E^{\mathbf{a},k}_G \psi$ for all $k \geq 1$, where
$E^{\mathbf{a},1}_G \psi := E^{\mathbf{a}}_G \psi$ and
$E^{\mathbf{a},k+1}_G \psi :=
E^{\mathbf{a}}_G(E^{\mathbf{a},k}_G \psi)$. 
\end{itemize}

\commentout{
Formally, we give semantics to each of these as follows:
\begin{itemize}
\item $(M,r,n) \vDash E^{\vec{\mathbf{a}}}_{G} \psi$ if for
all $1 \leq i \leq len(G)$ there exists an $n' \geq 0$ such
that $(M,r,n') \vDash A_{G_i} = a'$ for some $a' \in
\vec{\mathbf{a}_i}$ and $(M,r,n') \vDash B_{i} \psi$. 
\item $(M,r,n) \vDash C^{\vec{\mathbf{a}}}_{G} \psi$ if $(M,r,n) \vDash E^{\vec{\mathbf{a}},k}_G \psi$ for all $k \geq 1$, where $E^{\vec{\mathbf{a}},1}_G \psi := E^{\vec{\mathbf{a}}}_G \psi$ and $E^{\vec{\mathbf{a}},k+1}_G \psi := E^{\vec{\mathbf{a}}}_G(E^{\vec{\mathbf{a}},k}_G \psi)$.  (\textbf{Actually, I'm not quite sure if this is the right way to define it.  It seems at least plausible that this leaves open the possibility, that different levels of the Es could in fact be at different points where actions in those sets are taken -- i.e.\ the base level beliefs might hold while taking one action while the higher level beliefs might hold while taking a different action.  I can't quite tell for sure whether we want to allow that, but my inclination is probably not.  If that's the case then we can avoid the problem by defining C in terms of if there exist times where those actions are taken such that \emph{time-stamped} common belief holds.})
\end{itemize}
}

Returning to the example, although the agents do not have time-stamped common
belief at all the points when they act, they do have action-stamped
common belief.
General $Z$ acted believing that General $Y$ had acted as expected,
and also believing that General $Y$ acted believing that he would act as
expected, and so on.

It is easy to see that time-stamped common belief can be viewed as a
special case of action-stamped common belief: Given a time-stamping
function $t$, we simply take 
$ACTING_{i,G}$ to be true at those points $(r,n)$ such that
$t(i,r)$ holds.

\fullv{It is worth noting that, in} \shortv{In} both this and the
previous section, the 
agents having a protocol in advance for how to deal with the situation
is not really necessary for them to succeed.  In the examples,
consider a scenario where generals are in fact not trained for how to
handle the situation, but instead General $Y$ has the brilliant idea
to lay traps and send a messenger to meet General
$Z$ upon arrival. As long as message delivery is reliable,
action-stamped common belief can be achieved and they can successfully
defeat the enemy. 

\section{Joint Behaviors Among Changing Groups}\label{sec:dynamic}

In practice, 
the members of groups change
over time.  For example, a group of firefighters may work together to
safely clear a burning building, but (thankfully!) they don't need to
wait until all the firefighters are on the scene, or even until it is known
which firefighters are coming, in order for the first
firefighters to begin. Instead, structures and guidelines allow the
set of firefighters who are on the scene to act cooperatively, even
without each firefighter knowing who else will show up.

The formalisms of the two previous sections assumed a fixed group $G$,
so cannot capture this kind of scenario.  But the changes necessary to
do so are not 
complicated.  Rather than considering (some variant of) common belief 
with respect to a fixed set $G$ of agents, we consider it with respect to an
\emph{indexical} set $S$, one whose interpretation depends on the
point.  More precisely, an indexical set $S$ is a function from points
to sets of agents; intuitively, $S(r,n)$ denotes the members of the
indexical group $S$ at the point $(r,n)$.  We assume that a model is
extended so as to provide the interpretation of $S$ as a function.

Our semantics for action-stamped common belief 
with indexical sets are now a straightforward generalization of the
semantics for rigid (non-indexical) sets: 
\begin{itemize}
\item $(M,r,n) \vDash E^{\mathbf{a}}_{S} \psi$ if
for all $n'$ and all $i \in S(r,n')$ such that
$(M,r,n') \vDash ACTING_{i,S}$, it is also the case
that $(M,r,n') \vDash B_{i} \psi$. 
\item $(M,r,n) \vDash C^{\mathbf{a}}_{S} \psi$ if $(M,r,n)
\vDash E^{\mathbf{a},k}_S \psi$ for all $k \geq 1$, where
$E^{\mathbf{a},1}_S \psi := E^{\mathbf{a}}_S \psi$ and
$E^{\mathbf{a},k+1}_S \psi :=
E^{\mathbf{a}}_G(E^{\mathbf{a},k}_S \psi)$. 
\end{itemize}
The only change here is that in the semantics of $E^{\mathbf{a}}_{S}$, 
we need to check the agents in $S(r,n)$ for each point.

Of course, we can also allow indexical sets in time-stamped common
belief in essentially the same way.  
Whereas
in the semantics of $E_G^t\psi$, we required that
$(M,r,n) \vDash E_G^t\psi$ if, for all $i \in G$, $(M,r,t(i,r)) \vDash
B_{i} \psi$, now we require that
$(M,r,n) \vDash E_S^t\psi$ if, for all agents $i$, if $i \in
S(r,t(i,r))$, then $(M,r,t(i,r)) \vDash B_{i} \psi$.  We care about what agent
$i$  believes at $(M,r,t(i,r))$ only if $i$ is actually in group $S$
at the point $(r,t(i,r))$.

\commentout{
While we think of the action-stamped version of common belief as more
interesting and important for cooperation, it is worth noting that we
can make a similar extension of time-stamped common belief to allow
for dynamic groups.  To start though, we extend our time-stamped
definition in a different manner, allowing $t$ to be a function of
both runs and agents.  Letting $t$ depending on the run allows for
agents to have uncertainty regarding the plan, and brings it closer to
what can be captured in the action-stamped setting, where an agent
acting as part of the cooperative plan at time $n$ in one run may not
be doing so in some other run.  (We could of course have allowed $t$
to depend on runs from the start in Section~\ref{sec:time-stamped},
but chose to start with the simplified version for ease of
understanding.) 

In the dynamic setting, the intuition for $t$ is slightly different,
saying only that agent $i$ will act with the requisite beliefs at
point $(r, t(r, G_i))$ \emph{if} she is in the group $G$ at $(r, t(r,
G_i))$.  Formally the semantics then become the following: 
\begin{itemize}
\item $(M,r,n) \vDash E^{t}_{G} \psi$ if for all $n'$ and all $G_i \in \pi((r,n'),G)$, if $t(r, G_i) = n'$ then $(M,r,n') \vDash B_{G_i} \psi$.
\item $(M,r,n) \vDash C^{t}_{G} \psi$ if $(M,r,n) \vDash E^{t,k}_G \psi$ for all $k \geq 1$, where $E^{t,1}_G \psi := E^{t}_G \psi$ and $E^{t,k+1}_G \psi := E^{t}_G(E^{t,k}_G \psi)$.
\end{itemize}
In essence, for $E^t_{G} \psi$, at every point all agent's who are in the group \emph{and} whose time it is must believe $\psi$.  Note that the semantics laid out here look much more similar in structure to those we gave for action-stamped common belief than our original definition of time-stamped common belief.  Of course, it would have been possible to give a definition for fixed groups that uses this structure and is equivalent to our original definition, but we chose to give that definition above to ease the reader in.

Though we haven't emphasized it until now, it is worth noting that all of our definitions specify \emph{run properties}; that is, for any of the above notions, they hold true at a point in a run if and only if they hold at all points in the run.  This is intentional.  Time-stamped and action-stamped common belief are both properties of whole runs, specifying that agents hold certain beliefs at all the times when they're supposed to act / points when they do act, rather than properties of particular points.
}

\commentout{
$t$ then must also be a variable, as the set being assigned to times will change between points. We can now make the following changes to account for changing groups:

For ``everyone believes'' we get
\begin{itemize}
\item $(M,r,n) \vDash E^{t}_{G} \psi$ if, for all $G_i \in \pi((r,n),G)$, it is the case that $G_i \in \pi((r,t^{n}_{G_i}),G)$ and $(M,r,t^{n}_{G_i})  \vDash B_{G_i} \psi$ , where $t^{n}_{G_i}$ is $\pi((r,n),t)(G_i)$.
\end{itemize}
What this says is that for everyone in the group at time $n$, they are each still in the group at the time that was assigned to them at time n (that is, $t^{n}_{G_i}$), and they believe $\psi$ at that time.  In essence, $E^{t}_{G}$ is a statement about the members of the group \textit{at time $n$} and their group-membership and beliefs at the times designated for each of them.

We also introduce a simple temporal operator to the language, $\always$.
\begin{itemize}
\item $(M,r,n) \vDash \always \psi$ if $(M,r,n') \vDash \psi$ for all $n'$.
\end{itemize}
(\textbf{Is there a standard symbol for this...? It seems like something that must be standard, but I'm not sure I've seen it.})  Note that this is not quite the standard $\Box$ operator; $\Box$ is usually that $\psi$ is true for all times \emph{from $n$ and on}, whereas we're using $\always$ here to mean it is true for all times, including those before $n$.
$\always E^{t}_{G} \psi$, for example, then means something along the lines of ``for all times $n$, the group members at $(r,n)$ each believe $\psi$ at the time they're assigned at $(r,n)$.''

Finally, time stamped common belief then becomes
\begin{itemize}
\item $(M,r,n) \vDash C^{t}_{G} \psi$ if $(M,r,n) \vDash (\always E^{t}_G)^k \psi$ for all $k \geq 1$, where $(\always E^{t}_G)^1 \psi := \always E^{t}_G \psi$ and $(\always E^{t}_G)^{k+1} \psi := \always E^{t}_G((\always E^{t}_G)^k \psi)$. 
\end{itemize}

We note two important aspects of these definitions.
The first is that $\always$-formulas express \emph{run} properties rather than \emph{point} properties; that is, if $(M,r,n) \vDash \always \varphi$ then $\always \varphi$ must in fact be true at \emph{all} points in the run $r$.

Though we didn't emphasize it before, our definitions of time-stamped common belief in Section~\ref{sec:time-stamped} and action-stamped common belief in Section~\ref{sec:action-stamped} are in fact run properties; because $G$ and $t$ are fixed, if $C^t_{G}$ was satisfied at any point in a run then it was satisfied at every point in that run.  

For our definitions in this section, in the case of dynamic groups, had we defined $C^t_G$ in the obvious way simply by nesting $E^t_G$ it would in fact \emph{not} have been a run property.  $E^t_G$ at point $(r,n)$ is always a statement about the beliefs of the particular agents who are in the group at $(r,n)$, but intuitively we don't want the current group members to have this ``special role'' in the determination of time-stamped common belief -- time-stamped common belief \emph{ought} to be a run property, dependent on the beliefs of the different group members over time, not just the beliefs of those in the group at $(r,n)$ \emph{about} the beliefs of others.

There are at least three natural-seeming ways to go from our notion of ``time-stamped everyone believes'' to a notion of time-stamped common belief that is a run property.  The first is what we have specified above, by repeated nesting of $\always E^t_G$.  The second is by prepending $\always$ to repeated nesting of $E^t_G$, giving semantics in terms of $\always (E^{t,k}_G)$ for all $k \geq 1$.  The third option would be to define time-stamped common belief and everyone believes relative to a \emph{fixed} time-stamp $t$ that maps a set of agents $G^*$ to times, such that $G^*_i \in G$ at $(r, t(G^*_i))$ for each $G^*_i \in G^*$ and all runs $r$.  All three of these options seem natural, and we saw no definitive reason to use any one of them.  That said, we decided not to use the third option because it seemed to preclude the type of ``dynamic plans'' scenarios that dynamic groups intuitively ought to help us be able to model.  Between the first two options, we decided to use the first one because it seemed to more directly parallel the earlier definitions of common belief and its variants, defining it solely in terms of repeated nesting of the exact same operator (here $\always E^t_G$).

The second important thing to note about these definitions is that getting the modeling right, and in particular specifying the right $G$, is very important.  Let's return to the example we started this section with, of firefighters showing up to a scene over time.  If you specified $G$ to be the set ``firefighters at that building'' then there very well might not be time-stamped common belief, because there might be a firefighter who had been at the scene two years earlier to buy a coffee and who at the time had no beliefs about any future fires that might occur there.  On the other hand, if you specified $G$ to be ``the set of firefighters present at the scene to fight that fire'', time-stamped common belief might well be achieved.  This actually seems correct to us --- the set of all firefighters at the building during the course of the run did \emph{not} have time-stamped cooperative behavior, but those present to fight that fire did.  Our point is only that the modeling choices are important, as they often are in these types of frameworks.

Lastly, we can of course define a notion of action-stamped common belief for dynamic groups, much the same way we have done here for time-stamped common belief.
} 

\section{Significance}\label{sec:sig}

In Sections~\ref{sec:time-stamped}-\ref{sec:dynamic} we showed that
action-stamped common belief can suffice to enable joint
behavior, whereas the prior work on the topic had assumed common
belief was necessary.  Why does this matter?  We argue that
it is important for two reasons:
1) misunderstanding the type
of belief necessary can lead to mis-evaluation of cooperative
capabilities, and 2) requiring common belief can unnecessarily
make cooperation impossible in scenarios where it is in fact possible
and could be quite beneficial. 


As part of the recent push for more research on cooperative AI, some
have argued that we should  
``construct more unified theory and vocabulary related to problems of
cooperation'' 
\cite{Dafoe20}.  
One important step in this program is (in our opinion)
formalizing the requirements for various types of cooperation,
including joint behavior.   This, in turn, 
requires understanding the level and type of (common) belief needed for
joint behavior.  As our examples have shown, full-blown common belief is
not necessary; weaker variants that are often easier to achieve
can suffice.
Relatedly, there has been a push to develop methods for
\emph{evaluating} the cooperative capabilities of agents, as a way of
developing targets and guideposts for the community~\cite{CAIFcall}.
Again, this will require understanding (among other things) what type
of beliefs are necessary for cooperation.  
Incorrect assumptions about the types of beliefs
necessary can lead to incorrect conclusions about the feasibility of
cooperation.  For 
example, if an evaluation system takes as given the assumption that it
is impossible for agents that cannot achieve common belief to behave
cooperatively, it may in fact lead to effective cooperative agents
being scored badly, leading to misdirected research. 


A second reason that it is important to clarify the types of beliefs
necessary for joint behavior is that misunderstanding them can lead to
systems unnecessarily aborting important cooperative tasks.  As
is well known, achieving true common knowledge can be
remarkably difficult in real-world systems, often requiring either a
communication system that guarantees truly synchronous delivery or
guaranteed bounded delivery time together with truly synchronized
clocks \cite{FHMV}.
Action-stamped common belief can sometimes be achieved when common
belief cannot.  To demonstrate the importance of this, we consider an
example from the domain of urban search and rescue, a domain where 1)
the use of multi-agent systems consisting of humans and AI agents has
long been considered and advocated for, 2) the types of teamwork
necessary can be complex, and 3) there is some evidence of potential
adoption, having been used, for example, at a small scale in the
aftermath of September 11th~\cite{CM03,KTNMTSS99,KT01,LN13,QTP20,SYDY04}.
Though the example we give is a simple, stylized case, the domain is sufficiently complex that we would expect these types of issues to arise in practice if systems were deployed at scale.

\begin{example}
An earthquake occurs, causing a large building to collapse.
The nearest search and rescue team arrives on scene, and the
incident commander has to decide how to proceed.
%
The team has determined that the
structure is stable and will not collapse, and so is safe to
enter. However, attempting to exit the building may disrupt
the structure and cause harm. 
The incident commander determines that
there are two reasonable options:
\begin{enumerate}
  \item Wait a week for a heavy piece of machinery that will certainly
be able to safely lift the roof of the collapsed building on
its own and allow rescuers safe access to the building.

\item A team can enter the building and restabilize parts of the roof.
The restabilization would not be enough to make it safe to
exit\shortv{, }\fullv{---in
fact, it would require adjusting the structure in ways that would make
an attempt to exit even more risky---}but it would allow a more easily
accessible robotic system to safely remove the
roof piece by piece, allowing the rescuers and anyone trapped
inside to safely escape.
\end{enumerate}

The
incident commander decides it is best not to wait, and so takes
the second, joint-behavior-based approach.  He sends the team of
rescuers in to 
begin the necessary process, and tells them 
the full plan and that he expects it
will be 2-3 hours before the robot arrives on scene. 
%
The group enters the wreckage and secures it in the necessary ways,
as planned.  But it turns out that the earthquake
affected many buildings, so the robot is in high demand.  It
ends up taking close to 8 hours for the robot to arrive on
scene.
When the robot arrives, the incident commander enters the relevant
information in the robot's system---namely, the full plan 
and that the restabilization has been carried out---so the robot can
carry out its part of the specified cooperative plan. 

If the robot's model of joint behavior requires
common belief, a problem will arise.  At no point is there
ever common belief of the joint behavior.  Before the robot
arrives, the robot certainly has no belief about the joint
behavior.  And 
when the robot arrives, it must consider the
possibility that, because of the delay, the rescuers have given
up hope of the robot arriving and concluded that they may have
to wait a full week until the larger piece of machinery is
available.  Even if this isn't actually the case, because the robot
considers it possible,
common belief will not be achieved.  So
if the robot assumes that joint behavior requires common belief, it
will determine that the joint behavior cannot be
carried out. Thus, everyone will have to wait a week for
the heavier machinery, risking the lives of anyone trapped
inside.

If, on the other hand, the robot's theory of joint behavior is
based on action-stamped common belief, the task will be 
properly and safely carried out as soon as the robot arrives:  When the rescuers
perform their part, they believe that the robot will arrive soon
and perform its part of the task.  Similarly, when the robot arrives
and the incident commander enters the relevant information, the robot
believes that the rescuers held those beliefs when acting
(and therefore performed the required adjustments).
The
rescuers believed that the robot would hold these beliefs when
it arrived, the robot believed they would, and so on.  The
fact that the robot 
arrived later than expected and the rescuers may have started
to have uncertainty about the plan doesn't affect the
requisite beliefs because all that matters are the beliefs of
the agents at the points where they act.  
\end{example}

This example highlights the value of getting the types of beliefs
necessary right; getting the theory right, and basing it on
action-stamped common belief instead of standard common-belief, can
enable cooperation in a range of important scenarios where standard
common belief is impossible or difficult to achieve, whereas
action-stamped common belief may be easily attainable. 

\commentout{
\section{Shared Plans and Joint Intentions}

So if standard common belief isn't strictly necessary for achieving cooperative behavior, why was it used in all the previous work?  The answer seem to us to be, simply, that they weren't trying to \emph{define} cooperative behavior.  Instead, they were trying to \emph{engineer} collaborative or cooperative behavior.  And to do that, they tried to come up with \emph{mental states} that could be used to ensure agents would behavior cooperatively, with one group for example focusing on the notion of joint intentions as a way to ensure cooperative behavior and another developing a framework of SharedPlans.  And indeed, had we been formalizing these notions we too would likely have formalized them using common belief.  For both of them, after all, it seems natural to ask ``when did it happen?''  ``When did the agents have a joint intent?''  ``When did the agents have a shared plan?''  And in order to have a specific point in time when it occurred, neither of the notions we developed in this paper would suffice.  However, when addressing the question ``when did the agents behave cooperatively?'' the answer may well be ``this one did his part at time $t$ and this one did hers at time $t'$''.  And similarly, it needn't inherently be the case that agents collectively have some sort of atomic mental state that is necessary for cooperation and occurs at a single time.  

From an engineering perspective, the prior work may well have been right, that common belief is an important component for effectively ensuring cooperation.  But as we more and more frequently consider what it \emph{means} for AI systems to cooperate with humans and how we can assess how cooperative an agent is, it seems important to recognize that standard common belief need not necessarily be a component.
} 

\section{On the Necessity and Sufficiency of Action-Stamped Common
Belief for Joint Behavior}\label{sec:correctness} 

We've argued in this paper that the prior work was incorrect in
asserting that common belief was necessary for joint behavior, and shown
by example that action-stamped common belief can suffice.
We now argue that an even stronger statement is true: there is a sense
in which action-stamped common belief is necessary and sufficient for
joint behavior.  We say ``in a sense'' here, because much depends on the
conception of joint behavior being considered.  So what we do in this
section is give a property that we would argue is one that we would
want to hold of joint behavior, and then show that action-stamped common belief is
necessary and sufficient for this property to hold.

What does it take to go from a collection of individual behaviors to a
joint behavior?  The following example may help illuminate some of the
relevant issues.
\begin{quote}
Jasper and Horace are both crooks, though neither is an evil
genius by any stretch of the imagination.  They both
independently decide to rob the Great Bank
of London 
on exactly the same day.
As it turns out, neither of them did a good job preparing, and
they each knew about only half of the bank's security systems,
and so made plans to bypass only that half.  By sheer dumb
luck, between them they know about all the bank's security
systems.  So when each bypasses the part that they know about
(at roughly the same time), the bank's security systems go down. 
They each make it in, steal a small fortune, and escape, none
the wiser as to the other's behavior or that their plan was
doomed to fail on its own. 
\end{quote}

Is Jasper and Horace robbing the bank an instance of joint behavior?  We
think not.  One critical component that distinguishes this from a
joint behavior is the beliefs of the agents.  Joint behaviors are
collective actions where people do their part because they believe
that everyone else will do their part as well.  Here, Jasper and
Horace have no inkling that the other will help disable the
system. 

We now want to capture these intuitions more formally.
 We start by adding another special Boolean variable $SHOULD\_ACT_{i,S}$ for
each agent $i$ and indexical group $S$, specifying the points in each
run where agent $i$ is 
supposed to act towards the plan of group $S$. 
We then add a special formula $\chi_S$ to the language:%
\footnote{As long as the set of agents is finite (which we implicitly
assume it is), we can express $\chi_S$ in a language that includes a
standard modal operator $\boxdot$, where $\boxdot \varphi$ is true at a
point $(r,n)$ iff $\varphi$ is true at all points $(r,n')$ in the run.
For ease of exposition, we do not introduce the richer modal logic here.}
\begin{itemize}
\item $(M,r,n) \vDash \chi_S$ if for all $n'$ and all agents
$i \in S(r,n')$, $(M,r,n') \vDash SHOULD\_ACT_{i,S} \rightarrow
ACTING_{i,S}$ 
\end{itemize}
The formula $\chi_S$ is thus true at a point $(r,n)$ if, at all points in
run $r$, each agent $i$ in the indexical group $S$ plays its part in
the group plan whenever it 
is supposed to.  


If we think of $ACTING_{i,S}$ as ``$i$ is taking part in the
joint behavior of the group $S$'', then the property JB$_S$ (for ``Joint Behavior, group $S$'') that we
now specify 
essentially says that to have truly joint behavior, each agent
in $S$ must believe when she acts that all of the members of the
(indexical) group
$S$ will
do what they're supposed to; if they don't all have that belief,
then
it's not really joint behavior.
Formally, JB$_S$ is a property of an indexical group $S$ in a model $M$: 

\smallskip
[JB$_S$:]	For all points $(r,n)$ and agents $i \in S(r,n)$, $ (M,r,n)
\vDash ACTING_{i,S} \rightarrow B_i \chi_S$.
\smallskip



Requiring JB$_S$ for joint behavior makes action-stamped common
belief of $\chi_S$ necessary for joint behavior.

\begin{theorem}\label{thm:necessity}
If JB$_S$ holds in  a model $M$,
then $(M,r,n) \vDash C^a_S \chi_S$ for all points $(r,n)$.
\end{theorem}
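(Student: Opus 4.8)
The plan is to prove the stronger statement that each formula $E^{\mathbf{a},k}_S \chi_S$ holds at \emph{every} point of $M$, and then conclude $C^{\mathbf{a}}_S \chi_S$ everywhere directly from the definition of $C^{\mathbf{a}}_S$. The engine of the argument is a single necessitation-style observation: if a formula $\psi$ is true at all points of $M$, then $E^{\mathbf{a}}_S \psi$ is also true at all points of $M$. This holds because $B_i \psi$ is true at every point whenever $\psi$ is (every point, hence in particular every $\B_i$-accessible point, satisfies $\psi$), and so the defining condition of $E^{\mathbf{a}}_S$ holds at every point, its consequent $B_i \psi$ being true regardless of which agents are acting. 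Notably, this uses none of the special properties (seriality, transitivity, Euclideanness) of the $\B_i$; it works for an arbitrary accessibility relation.

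First I would establish the base case: JB$_S$ implies $(M,r,n) \vDash E^{\mathbf{a}}_S \chi_S$ for all $(r,n)$. This is essentially a matter of matching definitions. Fix a point $(r,m)$. By the semantics of $E^{\mathbf{a}}_S$, I must show that for every $n'$ and every $i \in S(r,n')$ with $(M,r,n') \vDash ACTING_{i,S}$, we have $(M,r,n') \vDash B_i \chi_S$. But this is exactly what JB$_S$ asserts at the point $(r,n')$ for agent $i$. Since $(r,m)$ was arbitrary, $E^{\mathbf{a}}_S \chi_S$ holds at every point. Here it is crucial that JB$_S$ is a \emph{global} property, quantifying over all points in all runs, which is precisely what lets the conclusion hold at every point rather than only at the points where agents act.

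With the base case and the necessitation observation in hand, the induction is routine. Writing $\phi_1 := E^{\mathbf{a}}_S \chi_S$ and $\phi_{k+1} := E^{\mathbf{a}}_S \phi_k$, the base case gives that $\phi_1$ holds at all points; and if $\phi_k$ holds at all points, the necessitation observation gives that $\phi_{k+1} = E^{\mathbf{a}}_S \phi_k$ holds at all points. Hence $E^{\mathbf{a},k}_S \chi_S$ holds at every point for all $k \geq 1$, so $(M,r,n) \vDash C^{\mathbf{a}}_S \chi_S$ for all $(r,n)$ by the definition of $C^{\mathbf{a}}_S$.

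I do not expect a serious obstacle here; the result is really an unwinding of definitions organized around the ``holds at every point'' invariant. The one step that requires care — and the step most likely to hide an error — is the base case, where one must check that JB$_S$ (stated for points where $i$ is both in the group and acting) lines up exactly with the quantifier structure of $E^{\mathbf{a}}_S \chi_S$, in particular the side condition $i \in S(r,n')$ and the quantification over all $n'$ within the run. Had JB$_S$ instead been stated only for the ``current'' time $n$, the base case, and hence the whole theorem, would fail; so the global character of JB$_S$ is doing the real work.
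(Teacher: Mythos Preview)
Your proof is correct, but it takes a different route from the paper's. The paper introduces a notion of \emph{$S$-$a$-reachability} (chains alternating ``move within a run to a point where some $i\in S$ is acting'' and ``take a $\B_i$-edge''), invokes the standard characterization that $C^{\mathbf a}_S\chi_S$ holds at $(r,n)$ iff $\chi_S$ holds at every point $S$-$a$-reachable from $(r,n)$ in at least one step, and then observes that any such reachable point $(r',n')$ is a $\B_i$-successor of some $(r'',n'')$ where $ACTING_{i,S}$ holds, so JB$_S$ at $(r'',n'')$ forces $\chi_S$ at $(r',n')$. Your argument instead bypasses reachability entirely: you maintain the global invariant ``$E^{\mathbf a,k}_S\chi_S$ holds at every point'' by induction on $k$, using a clean necessitation-style lemma (if $\psi$ is globally true, so is $E^{\mathbf a}_S\psi$) for the step. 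Your approach is more self-contained, since it does not rely on the reachability characterization of $C^{\mathbf a}_S$ (which the paper asserts without proof); the paper's approach, on the other hand, makes more visible \emph{why} only the last step of a belief chain matters and connects directly to the familiar fixpoint/graph view of common belief. Both arguments generalize in the same way to Theorem~\ref{thm:necessity-general}.
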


\begin{proof}
We begin by defining a notion of \emph{$a$-reachability}: A
point $(r',n')$ is $S$-$a$-reachable from $(r,n)$ in $k$ steps
if there exists a sequence $(r_0, n_0), \dots, (r_k, n_k)$ of
points such that $(r_0, n_0) = (r,n)$, $(r_k, n_k) = (r', n')$,
and for all $0 \leq l < k$, there exists a point
$(r_l, n'_l)$ and an agent $i \in S(r_l,n'_l)$ such that
$(M, r_l, n'_l) \vDash 
ACTING_{i,S}$ \sloppy{and $((r_l, n'_l),(r_{l+1}, n_{l+1})) \in
\B_i$.} 

By the semantics of $C^a_S$, $C^a_S \chi_S$ holds at $(r,n)$ iff
$\chi_S$ holds at every point $(r', n')$ that is
$S$-$a$-reachable from $(r,n)$ in $1$ or more steps.  Consider
any such point $(r', n')$.  Then, by the definition of reachability,
there exists some point $(r'', n'')$ and some agent $i \in
S(r'',n'')$ such that 
$(M, r'', n'') \vDash ACTING_{i,S}$ and $((r'', n''), (r',
n')) \in \B_i$.  Because $(M, r'', n'') \vDash
ACTING_{i,S}$, we get by JB$_S$ that
$(M, r'', n'') \vDash B_i \chi_S$.  Then by the semantics of
$B_i$ and the fact that $((r'', n''), (r', n')) \in \B_i$ we
get that $(M, r', n') \vDash \chi_S$.  But $(r',n')$ was an
arbitrary point $S$-$a$-reachable from $(r,n)$ in $1$ or more
steps, so $\chi_S$ holds at all such points, and we have that
$(M,r,n) \vDash C^{a}_S \chi_S$.  But $(r,n)$ was also
arbitrary, so $C^{a}_S \chi_S$ holds at all points. 
\end{proof}	

\commentout{
Consider an arbitrary $n$.  Our proof proceeds by induction to show that $M,r,n \vDash E^{a,k}_{G} \chi$ for all $k$.
For the base case, note simply that
property~\ref{property:ascb} holds at all points and so it
immediately follows that at all points where $GROUP\_ACTING_i$
holds for an agent $i$ it is also the case that $B_i \chi$, so
we have that $M,r,n \vDash E^{a,1}_{G} \chi$. 

For the inductive case, we'll assume that $M,r,n \vDash E^{a,k}_{G} \chi$, and show that $M,r,n \vDash E^{a,k+1}_{G} \chi$.  This is equivalent to assuming $\chi$ is true at every point that is $G$-$a$-reachable from $(r,n)$ in $k$ steps and showing that it is true at every point that is $G$-$a$-reachable from $(r,n)$ in $k+1$ steps.  Assume for the sake of contradiction that there is a point $G$-$a$-reachable from $(r,n)$ in $k+1$ steps where $\chi$ isn't true, that is, there is an agent $i$, a point $(r_{k+1}, n_{k+1})$, and a point $(r_k, n_k)$ that is $G$-$a$-reachable from $(r,n)$ in $k$ steps such that $(M, r_k, n'_k) \vDash GROUP\_ACTING_i$,  $((r_k, n'_k),(r_{k+1}, n_{k+1})) \in \B_i$, and $(M, r_{k+1}, n_{k+1}) \nvDash \chi$.  Choose an arbitrary such $(r_{k+1}, n_{k+1})$ and $(r_k, n'_k)$. But because property~\ref{property:ascb} holds at all points, $(M, r_k, n'_k) \vDash GROUP\_ACTING_i$ implies $(M, r_k, n'_k) \vDash B_i \chi$, which combined with the fact that $((r_k, n'_k),(r_{k+1}, n_{k+1})) \in \B_i$ implies that $(M, r_{k+1}, n_{k+1}) \vDash \chi$.  But this contradicts our assumption that $\chi$ isn't true at $(r_{k+1}, n_{k+1})$, and it was an arbitrary such point, so we get that $\chi$ is true at all points $G$-$a$-reachable from $(r,n)$ in $k+1$ steps, completing the proof.
} 


The converse to Theorem~\ref{thm:necessity} also holds; that is,
action-stamped common belief of $\chi_S$ suffices for JB$_S$ to
hold.  Put another way, action-stamped common belief is
exactly the ingredient that we need to meet the belief requirements 
of the property that we used to characterize
joint behavior. 

\begin{theorem}\label{thm:sufficient}
If $(M,r,n)
\vDash C^a_S \chi_S$ for all points $(r,n)$, then
JB$_S$ holds in $M$.
\end{theorem}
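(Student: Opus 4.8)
The plan is to observe that this direction is essentially immediate once we unpack the definition of action-stamped common belief, and in particular that it requires only the \emph{first} level of the nested-belief hierarchy rather than the full $a$-reachability machinery used in Theorem~\ref{thm:necessity}. Concretely, I would fix an arbitrary point $(r,n)$ and an arbitrary agent $i \in S(r,n)$, and assume $(M,r,n) \vDash ACTING_{i,S}$; the goal is then to derive $(M,r,n) \vDash B_i \chi_S$, which is exactly the content of JB$_S$ at that point and agent.

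First I would invoke the hypothesis at the point $(r,n)$ itself: since $(M,r,n) \vDash C^a_S \chi_S$, the semantics of $C^a_S$ gives $(M,r,n) \vDash E^{a,k}_S \chi_S$ for every $k \geq 1$, and in particular for $k=1$, i.e.\ $(M,r,n) \vDash E^a_S \chi_S$. Then I would apply the semantics of $E^a_S$ with the witness $n' = n$: since $i \in S(r,n)$ and $(M,r,n) \vDash ACTING_{i,S}$, the defining clause for $E^a_S$ yields directly that $(M,r,n) \vDash B_i \chi_S$. This is precisely the conclusion required, and since $(r,n)$ and $i$ were arbitrary, JB$_S$ holds throughout $M$.

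The only subtlety worth flagging is that the argument uses the hypothesis ``for all points'' in a completely trivial way: we only ever evaluate $C^a_S \chi_S$ at the very point $(r,n)$ where we want the belief conclusion, so no reachability argument or induction on $k$ is needed here. Thus the main obstacle is expository rather than mathematical: since the substantive content lived in the necessity direction (Theorem~\ref{thm:necessity}), I would take care to make explicit that sufficiency collapses to the single observation that $C^a_S$ entails its first conjunct $E^a_S$, whose semantics was by design chosen to be the formalization of JB$_S$.
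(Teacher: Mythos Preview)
Your proof is correct and matches the paper's own argument essentially step for step: fix $(r,n)$ and $i\in S(r,n)$ with $ACTING_{i,S}$, use the hypothesis to get $E^a_S\chi_S$ from $C^a_S\chi_S$, then instantiate the semantics of $E^a_S$ at $n'=n$ to obtain $B_i\chi_S$. Your added remark that only the $k=1$ level is needed and that the ``for all points'' hypothesis is used trivially is accurate and is exactly the spirit of the paper's short proof.
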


\begin{proof}
Consider an arbitrary point $(r,n)$ and agent $i \in S(r,n)$ such
that $(M,r,n) 
\vDash ACTING_{i,S}$.  By assumption, $(M,r,n) \vDash C^a_S
\chi_S$.  So, by the semantics of $C^a_S$, it follows that
$(M,r,n) \vDash E^a_S \chi_S$.  In turn, it follows from the
semantics of $E^a_S$ that $(M,r,n) \vDash B_i \chi_S$
(because $(M,r,n) \vDash ACTING_{i,S}$).  
But $r$, $n$, and $i$ were
arbitrary, so we have that $(M,r,n) \vDash ACTING_{i,S}
\rightarrow B_i \chi_S$ for all such points and agents.
Thus, JB$_S$ holds in $M$.
\end{proof}

The astute reader will have noticed that  the
proofs of   Theorem~\ref{thm:necessity} and \ref{thm:sufficient}
did not
depend in any way on $\chi_S$.  The formula $\chi_S$ in these
theorems can be replaced by an arbitrary
formula $\varphi$.  
In other words, if all the agents in $S$
believe $\varphi$ at the point when they act,
then $\varphi$ is
action-stamped common belief, and if $\varphi$ is action-stamped
common belief, then all agents in $S$ must believe $\varphi$ at
the point when they act.
Formally, the proofs of Theorem~\ref{thm:necessity} and
~\ref{thm:sufficient} also show the
following: 

\begin{theorem}\label{thm:necessity-general}
If $(M,r,n)
\vDash ACTING_{i,S} \rightarrow B_i \varphi$ for all points $(r,n)$
and agents $i \in S(r,n)$, then $(M,r,n) \vDash C^a_S
\varphi$ for all points $(r,n)$. 
\end{theorem}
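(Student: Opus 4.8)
The plan is to replay, essentially unchanged, the $S$-$a$-reachability argument from the proof of Theorem~\ref{thm:necessity}, substituting the arbitrary formula $\varphi$ for $\chi_S$ throughout. Recall that a point $(r',n')$ is $S$-$a$-reachable from $(r,n)$ in $k$ steps when there is a chain of points, each link of which passes from a point at which some agent $i \in S$ satisfies $ACTING_{i,S}$ to a point related to it by $\B_i$. The first thing I would invoke is the reachability characterization of $C^a_S$ already used there: by unwinding the definition of $C^a_S$ in terms of iterated $E^a_S$ (a straightforward induction on $k$), $(M,r,n) \vDash C^a_S \varphi$ holds exactly when $\varphi$ is true at every point that is $S$-$a$-reachable from $(r,n)$ in one or more steps. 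This characterization concerns only the reachability relation and the target formula, so it transfers to $\varphi$ in place of $\chi_S$ with no change.

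The core step is then identical to the one in Theorem~\ref{thm:necessity}. Fix an arbitrary point $(r,n)$ and let $(r',n')$ be any point $S$-$a$-reachable from it in one or more steps. By the definition of reachability, the final link of the chain originates at some point $(r'',n'')$ at which there is an agent $i \in S(r'',n'')$ with $(M,r'',n'') \vDash ACTING_{i,S}$ and $((r'',n''),(r',n')) \in \B_i$. This is the only place the hypothesis enters: instead of appealing to JB$_S$, I apply the more general assumption $(M,r'',n'') \vDash ACTING_{i,S} \rightarrow B_i \varphi$ to conclude $(M,r'',n'') \vDash B_i \varphi$. The semantics of $B_i$ together with $((r'',n''),(r',n')) \in \B_i$ then yield $(M,r',n') \vDash \varphi$. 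Since $(r',n')$ was arbitrary among the points reachable in one or more steps, $\varphi$ holds at all of them, so $(M,r,n) \vDash C^a_S \varphi$; and since $(r,n)$ was arbitrary, this holds at every point.

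I do not expect any genuine obstacle: the entire content of the argument is the observation the authors already flag, namely that the proof of Theorem~\ref{thm:necessity} never unpacks the meaning of $\chi_S$---it uses only the implication $ACTING_{i,S} \rightarrow B_i \chi_S$ at the acting point, which is precisely the (now abstracted) hypothesis. The one thing worth double-checking is exactly this: that no step silently relies on $\chi_S$ being a run-property formula or on its particular content involving $SHOULD\_ACT_{i,S}$ and $ACTING_{i,S}$. Scanning the proof, the definition of $\chi_S$ is never invoked, so the substitution is sound and the same argument establishes the general statement.
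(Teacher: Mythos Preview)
Your proposal is correct and matches the paper's own treatment exactly: the paper simply notes that the proof of Theorem~\ref{thm:necessity} never uses the definition of $\chi_S$, so replacing it by an arbitrary $\varphi$ and JB$_S$ by the stated hypothesis yields the result without change. Your replay of the $S$-$a$-reachability argument with this substitution is precisely that.
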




\begin{theorem}\label{thm:sufficient-general}
If $(M,r,n)
\vDash C^a_S \varphi$ for all points $(r,n)$, then
$(M,r,n) \vDash ACTING_{i,S} \rightarrow B_i \varphi$ for
all points $(r,n)$ and agents $i \in S(r,n)$. 
\end{theorem}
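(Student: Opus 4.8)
The plan is to reuse, essentially verbatim, the argument proving Theorem~\ref{thm:sufficient}, observing that it never exploited any special structure of $\chi_S$: the formula appears only as an opaque argument to the modal operators $C^a_S$, $E^a_S$, and $B_i$. So I would run the same chain of semantic unpackings with $\varphi$ substituted for $\chi_S$ throughout.

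Concretely, I would first fix an arbitrary point $(r,n)$ and an arbitrary agent $i \in S(r,n)$, and suppose $(M,r,n) \vDash ACTING_{i,S}$ (if this fails, the implication $ACTING_{i,S} \rightarrow B_i \varphi$ holds vacuously at that point for that agent). The next step invokes the hypothesis, which supplies $(M,r,n) \vDash C^a_S \varphi$. I would then unpack the semantics of $C^a_S$: since $C^a_S \varphi$ abbreviates $E^{a,k}_S \varphi$ for all $k \geq 1$, taking the $k = 1$ conjunct yields $(M,r,n) \vDash E^a_S \varphi$. Finally I would apply the semantics of $E^a_S$, which quantifies over all times $n'$ in run $r$ and all $i \in S(r,n')$ at which $ACTING_{i,S}$ holds; instantiating $n' = n$, the assumptions $i \in S(r,n)$ and $(M,r,n) \vDash ACTING_{i,S}$ make $(r,n)$ one of the quantified points, so we obtain $(M,r,n) \vDash B_i \varphi$. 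Hence $(M,r,n) \vDash ACTING_{i,S} \rightarrow B_i \varphi$, and since $(r,n)$ and $i$ were arbitrary this holds at all points and all $i \in S(r,n)$, which is the claim.

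I do not expect a genuine obstacle here, since the argument only peels off the outermost ($k = 1$) layer of common belief and then matches the current acting point against the quantifier in the semantics of $E^a_S$. The one point deserving care is that the hypothesis supplies $C^a_S \varphi$ at \emph{all} points, so it is in force at whatever acting point $(r,n)$ we happen to consider; this is exactly what licenses the instantiation $n' = n$ without any appeal to the content of $\varphi$, and it is the feature that makes the proof of Theorem~\ref{thm:sufficient} transfer unchanged.
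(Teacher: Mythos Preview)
Your proposal is correct and matches the paper's approach exactly: the paper does not give a separate proof of Theorem~\ref{thm:sufficient-general} but simply observes that the proof of Theorem~\ref{thm:sufficient} goes through verbatim with $\varphi$ in place of $\chi_S$, which is precisely what you do. Your unpacking of $C^a_S$ to $E^a_S$ and then instantiating $n'=n$ in the semantics of $E^a_S$ is the same chain of steps as in the paper's proof of Theorem~\ref{thm:sufficient}.
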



\commentout{
\begin{proof}
Consider an arbitrary $r$, $n$, and $i \in S(r,n)$ such that $(M,r,n)
\vDash ACTING_{i,S}$.  By assumption $(M,r,n) \vDash C^a_S
\varphi$.  So by the semantics of $C^a_S$ it follows that
$(M,r,n) \vDash E^a_S \varphi$.  In turn, it follows from the
semantics of $E^a_S$ that $(M,r,n) \vDash B_i \varphi$ for all $i
\in S(r,n)$ (because
$(M,r,n) \vDash GROUP\_ACTING_i$).  But $r$, $n$, and $i$ were
arbitrary, so we have that $(M,r,n) \vDash GROUP\_ACTING_i
\rightarrow B_i \varphi$ for all such points and agents. 
\end{proof}
}


\commentout{
We now briefly sketch an even stronger sense in which action-stamped
common belief suffices for joint behavior.   The idea is that if, for
all point $(r,n)$ and 
agents $i$, $i$  act at $(r,n)$ if  (a) $i \in S(r,n)$,
(b) $i$ is supposed to act at that point (i.e., $SHOULD\_ACT_{i,s}$ holds),
and (c) $i$ believes $C^a_S \chi$, then the agents will be acing
jointly.  So, roughly speaking, if agents act when they have
action-stamped common belief they are acting jointly, they really wil
be acting jointly.  To make this precise, we need to use the idea of a
\emph{knowledge-based program} \cite{FHMV}.  A knowledge-based
program, as the name suggests, has explicit tests for knowledge (or
belief).  For example, the following is a knowledge-based program for
an agent $i$:
$$\mbox{{\bf if} $B_i B_j \varphi$ {\bf do} $a$}.$$
According to this program, if $i$ believes that $j$ believes $\varphi$,
then $i$ should perform action $a$.
A knowledge-based program for agent $i$ consists of a collection of
such statements, where each belief formula in a test starts with $B_i$ (so the
outermost modality involves $i$'s beliefs).

We can associate with each knowledge-based program $\Pg$ a set of models
that \emph{represent} $\Pg$.  The details are beyond the scope of this
paper (and can be found in \cite{FHMV}); we just give the high-level
idea here.  We need to assume that poins of a run have more structure;
specifically, they have the form $(s_1, \ldots, s_n)$, where $s_i$ is
the \emph{local state} of agent $i$.  Intuitively, agent $i$'s local
state encodes all the information that agent $i$ has access to.  A
\emph{protocol} for agent $i$ is a function from agent $i$'s local
states to actions; intuitively, a protocol for agent $i$ tells agent
$i$ what to do in each local state.  Given a program $\Pg_i$ for agent
$i$ and a model $M$, we can define a protocol $P_i$ for agent; in
local state $s_i$.  The truth of a formula of the form $B_i \varphi$
depends only on $i$'s local state at the point $(r,n)$.   Given a
local state $s_i$ for agent, we determine $P_i(s_i)$ by finding a
point $(r,n)$ such that $i$'s local state at that point is $s_i$ (if
such a point exists), and determining which of the tests in $\Pg_i$
are true at the point $(r,n)$.  Since the truth of a formula $B_i
\psi$ depends only on $i$'s local state, it doesn't matter which point
$(r,n)$ we choose.  According to $P_i(s_i)$, agent $i$ performs the
action at the first line of $\Pg_i$ where the corresponding test is
true at $(r,n)$.  If none of the tests are true, or if there is no
point $(r,n)$ where $i$'s local state is $s_i$, then $i$ performs some
default action.  
}

\section{Conclusion and Future Work}\label{sec:conclusion}

We have argued here that, contrary to what was suggested in earlier work,
common belief is not necessary for joint behavior.
We have presented a new notion, \emph{action-stamped} common
belief, and shown that it is, in a sense, necessary and sufficient for
joint behavior, and can be
achieved in scenarios where standard common belief cannot.  
This is important because modelling the conditions needed for joint
behavior correctly can enable 
cooperation in important scenarios, such as search and rescue, where
it might not otherwise be possible. 
%
\fullv{We chose to use the term \emph{joint behavior} in this paper because
it sounded to our ears like it most accurately captured the notion we
were considering; no doubt to some readers other terms will sound like
a better fit.}
As we showed in Section~\ref{sec:correctness}, action-stamped common
belief characterizes scenarios where individuals do their part only if
they believe others will do the 
\fullv{same, whatever terminology we use.}
\shortv{same.}

We suspect that, for some readers, the idea that action-stamped common
belief is sufficient for joint behavior will seem obvious.
In a certain sense, we agree; in retrospect, it
\emph{does} feel like the obviously correct notion for joint
behaviors.  
That said, while action-stamped common belief seems
quite natural, it does not seem to have been studied in any prior literature.


With that in mind, it is worth  briefly discussing the connection between the ideas in
this paper and some of the prior work that has been done.  First, 
note that action-stamped common belief is in some ways the
natural variant of common belief for extensive-form games.  Because
an agent $i$'s information sets are usually specified only at nodes at
which agent $i$ moves, it is possible to reason about agent $i$'s beliefs
only at points where agent $i$ acts.  This makes it all the more surprising
that action-stamped common belief has not been formalized and studied
in its own right; in some sense, it captures what epistemic game
theorists have been implicitly considering in the case of extensive-form games. 

In this paper, we have considered the types of \emph{beliefs} necessary for
joint behavior, but that may not be the only factor involved (nor do
we claim it is; we  are just focused in this paper on the belief
component).  For example, in much of the literature, \emph{intent} is
taken to play an important role in various cooperative behaviors.   
Dunin-Keplicz and Verbrugge~\cite{DKV04} proposed a three-part notion
of ``collective commitment'', with the levels of belief (e.g., no one
believes, everyone believes, it is common belief) held at each of the
three parts leading to various types of collective commitment. Their
work is in some ways orthogonal to ours; it can be thought of
as considering various types of cooperative behaviors that can occur,
while ours just focuses on one particularly strict form, joint
behavior.  One way of interpreting our work in the context of theirs
is as saying that the top level of belief to consider for cooperation
should in fact be action-stamped common belief. 

Blomberg~\cite{Blomberg16} gives an insightful argument that
common belief (and variants thereof) of intentions is not necessary
for a joint intentional act.
Roughly speaking,
an agent may (incorrectly) believe that other agents do not share
his intent, as long as what he believes they intend
would still lead them to act in the manner conducive to his goals. 
We find his counterexample and
arguments compelling.
But this is perfectly consistent with our results.
Theorems \ref{thm:necessity} and \ref{thm:sufficient} show that
action-stamped common belief (or in the case of simultaneous acts,
standard common belief) that agents will do the necessary acts
is required for joint behaviors.
We place no requirements on what agents have to believe about other
agents' intentions.
Put a different way, 
Blomberg makes a compelling case that, when
characterizing joint behavior, it is a mistake to instantiate the $\varphi$ in
Theorems \ref{thm:necessity-general} and \ref{thm:sufficient-general}
with formulas about shared intents.
That is to say, it is not a necessary property of
cooperative behavior that agents act only if they are sure 
others are acting with the same purpose.

Ludwig~\cite{Ludwig07,Ludwig16} also presents an argument that
common belief is not necessary for joint (intentional) action.
Putting aside the question of whether his argument is correct,
it is not relevant to our considerations here as it relies on a much
broader notion of cooperative behavior than the joint behaviors we
consider in this 
paper (though he calls it ``joint action'').  That is, we certainly
agree with his conclusion that for \emph{some} types of cooperative
behavior agents don't need to be sure that others will do their part and so
don't need common belief, but for the types of cooperative behaviors
we consider in this paper we have shown in Theorem
\ref{thm:necessity-general} that (action-stamped) common belief is in
fact necessary. 

Lastly, Roy and Schwenkenbecher~\cite{RS21} consider a novel
notion of belief that they call ``pooled knowledge'', which is related to
distributed knowledge, and argue that it is both weaker than common
knowledge and sufficient for shared intentions.  The basic idea behind
the argument is that if agents are rational, then pooled knowledge
would induce some agent to act as a coordinator to guide the behavior
of the group.  It's certainly an interesting proposal, and one that
deserves further study.  From the perspective of this paper, it would
be interesting to try to formally analyze under what conditions pooled
knowledge/belief would lead to action-stamped common
knowledge/belief. 


The present work suggests two areas that are ripe for future work.   
The first is to more fully explore the logical aspects of
action-stamped common belief.  Can a sound and complete axiomatization
be provided?  What is the complexity
of the model-checking and validity problems for a language involving
action-stamped common knowledge?
How can we practically engineer systems that rely
on action-stamped common belief?
The second area we think worth exploring is
that of understanding better what levels of group knowledge are
required for other aspects of joint behavior and other types of
cooperation.  We focused on one aspect, revealing a
nuanced but important error in earlier thinking.  We think that there
may well be other aspects of cooperation that are worth digging into
in this fine-grained way. 
\fullv{Given the importance of cooperative AI, we hope that others will join
us in exploring these questions. }

\section*{Acknowledgments}
Meir Friedenberg and Joe Halpern were supported in part by 
ARO grant W911NF-22-1-0061 and MURI grant W911NF-19-1-0217.  Halpern
was also supported in part by a AFOSR grant FA9550-12-1-0040 and a
grant from the Algorand Centres of Excellence program managed by
Algorand Foundation.  Any opinions, findings, and conclusions or
recommendations expressed in this material are those of the author(s)
and do not necessarily reflect the views of the funders.

\bibliographystyle{eptcs}
\bibliography{z,joe}
\end{document}